\newtheorem{theorem}{Theorem}%[section]
\newtheorem{lemma}[theorem]{Lemma}
\def\myendproof{\hfill{\vbox{\hrule\hbox{%
   \vrule height1.3ex\hskip0.8ex\vrule}\hrule}}}
\newcommand\real{{\rm I\kern-0.2em\rm R}}
\newcommand\itreal{{\it I\kern-0.35em\it R}}
\newenvironment{proof}{

\noindent{\bf Proof:}\ }{
\hfill \myendproof

}
\newenvironment{opt}[2]{
\samepage

\begin{eqnarray*}
& & \mbox{#1\ } #2 \\
& & \mbox{subject to\ }\left\{ % } MATCHING
\begin{array}{rcl@{\hspace{0.2in}}l}}{
\end{array}\right.
\end{eqnarray*}
}
\newcommand{\mathfn}[1]{\mathop{\rm #1}\nolimits }
\newcommand{\maximize}{\mathfn{maximize}}
\newcommand{\minimize}{\mathfn{minimize}}
\title{A Bound on the Sum 
  \\ of Weighted Pairwise Distances
  \\ of Points Constrained to Balls
  \thanks{
    Technical Report \#1103,
    School of Operations Research and Industrial Engineering,
    College of Engineering, Cornell University, Ithaca NY.
    This research was partially supported by \'Eva Tardos' NSF PYI grant.}
}
\date{September 1994}
\author{
  Neal E. Young
}
\begin{document}

\maketitle

\begin{abstract}
  We consider the problem of choosing Euclidean points
  to maximize the sum of their weighted pairwise distances,
  when each point is constrained to a ball centered at the origin.
  We derive a dual minimization problem and show strong duality holds
  (i.e., the resulting upper bound is tight)
  when some locally optimal configuration of points is affinely independent.
  We sketch a polynomial time algorithm 
  for finding a near-optimal set of points.
\end{abstract}

\section{Introduction}
We consider the following maximization problem $P(n,w,\ell)$:
\begin{opt}{$\maximize_{\{p_i\}}$}{\sum_{1 \le i < j \le n} w(i,j)d(p_i, p_j)}
  p_i & \in & \real^{n-1}  & (i=1,..,n);
  \\ ||p_i|| & \le & \ell(i)     & (i=1,..,n).
\end{opt}
Here each $w(i,j) \ge 0$ and each $\ell(i) \ge 0$ is fixed, 
$d(p,q)$ denotes the Euclidean distance between points $p$ and $q$,
and $||p||$ denotes the Euclidean length (distance from the origin)
of point $p$.

We derive the following dual problem $D(n,w,\ell)$:
\begin{opt}{$\minimize_{\{x_i\}}$}{
    \sqrt{\sum_{1\le i < j \le n} \frac{w^2(i,j)}{x_i x_j}}
    \times \sqrt{\sum_{i=1}^n \ell^2(i) x_i}
    \times \sqrt{\sum_{i=1}^n x_i}}
  x_i & \in & \real & (i=1,..,n);
  \\ x_i & \ge & 0     & (i=1,..,n).
\end{opt}
Throughout the paper, $\frac{0}{0}$ is defined to be $0$.

We show that the value of the maximization problem
is at most the value of the minimization problem.
We use a physical interpretation of the two problems to show that
the values are equal provided the maximization problem
admits a set of points $\{p_i\}$ that is both affinely independent
and stationary
(i.e., the gradient of the objective function is a nonnegative combination
of the gradients of the active constraints,
a necessary condition at any local maximizer of $P(n,w,\ell)$).

We sketch how a near-optimal solution to the problem
can be found in polynomial time via the ellipsoid method.

\section{Related Work}

The case $w(i,j) = \ell(i) = 1$
(in which the optimal points are given by the vertices
of the regular $n$-simplex, achieving a value of $n\sqrt{{n \choose 2}}$)
was previously considered by \cite{Li75}.
Our Lemma \ref{weak duality}
generalizes a bound in that paper.

Specific instances of $P(n,w,\ell)$ 
were studied to obtain geometric inequalities 
that were used to analyze approximation algorithms
for finding low-degree, low-weight spanning trees
in Euclidean spaces \cite{KhullerRY94}.

Goemans and Williamson \cite{GoemansW94}
consider related problems with applications
to approximating the maximum cut in a graph
and to maximizing the number of satisfied clauses 
in a CNF formula.
We modify their approach to solving their problems
to obtain a polynomial time algorithm for ours.

\section{A Dual Problem}

\begin{lemma}
  For any $n$, $w$, and $\ell$,
  the value of the maximization problem $P(n,w,\ell)$
  is at most the value of the minimization problem $D(n,w,\ell)$.
  \label{weak duality}
\end{lemma}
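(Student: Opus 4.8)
The plan is to prove the inequality pointwise: for \emph{every} feasible primal configuration $\{p_i\}$ and \emph{every} feasible dual choice $\{x_i\}$, the primal objective at $\{p_i\}$ is at most the dual objective at $\{x_i\}$. Since the value of $P(n,w,\ell)$ is the supremum of the former and the value of $D(n,w,\ell)$ is the infimum of the latter, this suffices. So fix such $\{p_i\}$ and $\{x_i\}$. If some pair $i<j$ has $w(i,j)>0$ while $x_i x_j = 0$, the first radical of the dual objective is $+\infty$ and there is nothing to prove; so assume $x_i x_j > 0$ whenever $w(i,j) > 0$, and note that pairs with $w(i,j)=0$ contribute $0$ on both sides under the $\frac00=0$ convention.

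The first step is a single application of the Cauchy--Schwarz inequality, splitting each weight as $w(i,j) = \frac{w(i,j)}{\sqrt{x_i x_j}}\cdot\sqrt{x_i x_j}$:
\[
  \sum_{i<j} w(i,j)\, d(p_i,p_j)
  = \sum_{i<j} \frac{w(i,j)}{\sqrt{x_i x_j}}\cdot \sqrt{x_i x_j}\, d(p_i,p_j)
  \le \sqrt{\sum_{i<j}\frac{w^2(i,j)}{x_i x_j}}\;\cdot\;\sqrt{\sum_{i<j} x_i x_j\, d^2(p_i,p_j)}.
\]
This already exhibits the first of the three radicals in $D(n,w,\ell)$, so it remains to bound the second factor, $\sum_{i<j} x_i x_j\, d^2(p_i,p_j)$, by $\bigl(\sum_i \ell^2(i) x_i\bigr)\bigl(\sum_i x_i\bigr)$.

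The second step is the algebraic core. Expanding $d^2(p_i,p_j) = \|p_i\|^2 + \|p_j\|^2 - 2\langle p_i,p_j\rangle$ and symmetrizing over ordered pairs gives the identity
\[
  \sum_{i<j} x_i x_j\, d^2(p_i,p_j)
  = \Bigl(\sum_i x_i \|p_i\|^2\Bigr)\Bigl(\sum_i x_i\Bigr) - \Bigl\|\sum_i x_i p_i\Bigr\|^2
  \le \Bigl(\sum_i x_i \|p_i\|^2\Bigr)\Bigl(\sum_i x_i\Bigr);
\]
then primal feasibility $\|p_i\|\le\ell(i)$ together with $x_i\ge 0$ replaces each $\|p_i\|^2$ by $\ell^2(i)$. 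Substituting this back into the Cauchy--Schwarz bound produces exactly the dual objective evaluated at $\{x_i\}$, which completes the argument.

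There is no deep obstacle; the work is in choosing the right decomposition. The two things to get right are the bookkeeping for vanishing $x_i$ or $w(i,j)$ (handled by the $\frac00=0$ convention and the trivial $+\infty$ case above), and recognizing that the correct way to ``charge'' for the distances is the Cauchy--Schwarz split by $\sqrt{x_i x_j}$ followed by the variance-type identity $\sum_{i<j} x_i x_j \|p_i - p_j\|^2 = \bigl(\sum_i x_i\|p_i\|^2\bigr)\bigl(\sum_i x_i\bigr) - \|\sum_i x_i p_i\|^2$. The bound is tight precisely when Cauchy--Schwarz is tight (each $x_i x_j\, d(p_i,p_j)$ proportional to $w(i,j)$), when $\sum_i x_i p_i = 0$, and when $\|p_i\| = \ell(i)$ for every $i$ with $x_i>0$; these equality conditions are what the subsequent strong-duality discussion must realize simultaneously.
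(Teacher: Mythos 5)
Your proposal is correct and follows essentially the same route as the paper: the same Cauchy--Schwarz split by $\sqrt{x_i x_j}$, followed by the same expansion of $\sum_{i<j} x_i x_j d^2(p_i,p_j)$ into $(\sum_i x_i\|p_i\|^2)(\sum_i x_i) - \|\sum_i x_i p_i\|^2$ and the bound $\|p_i\|\le\ell(i)$. Your explicit treatment of the degenerate case $w(i,j)>0$ with $x_i x_j=0$ is a small point of added care not spelled out in the paper, but the argument is the same.
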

\begin{proof}
  Fix any $n$, $w$, and $\ell$.
  Fix any set of points $\{p_i\}$ and values $\{x_i\}$
  meeting the constraints of $P(n,w,\ell)$ and $D(n,w,\ell)$, respectively.
  Let $A(i,j) = \frac{w(i,j)}{\sqrt{x_i x_j}}$
  and $B(i,j) = \sqrt{x_i x_j} d(p_i,p_j)$
  for $1 \le i < j \le n$.
  Then, by the Cauchy-Schwartz inequality $A \cdot B \le \|A\|\times\|B\|$
  (where $A$ and $B$ are interpreted as $n \choose 2$-dimensional vectors,
  and $\cdot$ denotes the dot product):
  \begin{equation}
    \sum_{i<j} w(i,j) d(p_i,p_j)
    \le \sqrt{\sum_{i<j} \frac{w^2(i,j)}{x_i x_j}}
    \times \sqrt{\sum_{i<j} x_i x_j d^2(p_i, p_j)}.
    \label{one}
  \end{equation}
  It remains only to show
  \begin{displaymath}
    \sum_{i<j} x_i x_j d^2(p_i, p_j)
    \le \left(\sum_i x_i\right)
    \times \left(\sum_i \ell^2(i) x_i\right).
  \end{displaymath}

  Expanding the left-hand side,
  \begin{eqnarray}
    \lefteqn{\sum_{i<j} x_i x_j d^2(p_i, p_j)}
    \nonumber
    \\ & = & \frac{1}{2}\sum_{i,j} x_i x_j (p_i-p_j)\cdot(p_i-p_j)
    \nonumber
    \\ & = & \frac{1}{2}\sum_{i,j} x_i x_j
    (p_i\cdot p_i - 2 p_i\cdot p_j + p_j\cdot p_j)
    \nonumber
    \\ & \le & \sum_{i,j} x_i x_j (\ell^2(i) - p_i\cdot p_j)
    \label{two}
    \\ & = & \left(\sum_i x_i\right)\times\left(\sum_i x_i \ell^2(i)\right)
    - \left(\sum_i x_i p_i\right)\cdot\left(\sum_i x_i p_i\right)
    \nonumber
    \\ & = & \left(\sum_i x_i\right)
    \times \left(\sum_i x_i \ell^2(i)\right)
    - \left|\left|\sum_i x_i p_i\right|\right|^2
    \nonumber
    \\ & \le &  \left(\sum_i x_i\right)
    \times \left(\sum_i x_i \ell^2(i)\right).
    \label{three}
  \end{eqnarray}
\end{proof}

\begin{lemma}
  Fix any $n$, $w$, and $\ell$.
  Suppose the maximization problem $P(n,w,\ell)$
  admits a set of points $\{p_i\}$
  that is both stationary and affinely independent.
  Then the values of the two problems are equal.
  Further, there exists $\{x_i\}$ such that
  \begin{equation}
    x_i p_i = \sum_j w(i,j) \frac{p_i - p_j}{d(p_i,p_j)}
    \label{xeqn}
  \end{equation}
  (where $x_i = 0$ in case $\|p_i\| < \ell_i$,
  and $w(i,j) = w(j,i)$ and $w(i,i) = 0$),
  and $\{p_i\}$ and $\{x_i\}$ are global optima for the two problems.
  \label{strong duality}
\end{lemma}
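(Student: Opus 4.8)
The plan is: (i) convert the stationarity hypothesis into the system (\ref{xeqn}) via the Karush--Kuhn--Tucker conditions; (ii) use affine independence to show the resulting multipliers $\{x_i\}$ satisfy the multiplicative identity $w(i,j) = \frac{x_i x_j}{S}\,d(p_i,p_j)$, where $S = \sum_k x_k$; and (iii) feed this back into the proof of Lemma \ref{weak duality} to see that, for this particular $\{x_i\}$, the three inequalities (\ref{one}), (\ref{two}), (\ref{three}) all hold with equality. Then $\{p_i\}$ and $\{x_i\}$ are feasible for $P(n,w,\ell)$ and $D(n,w,\ell)$ with a common objective value, and Lemma \ref{weak duality} squeezes that value to be optimal for both.

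For step (i): affine independence makes the $p_i$ distinct, so the objective is differentiable at $\{p_i\}$ and its gradient in the $p_i$ block is $\sum_j w(i,j)\frac{p_i-p_j}{d(p_i,p_j)}$. Writing the $i$th constraint as $\|p_i\|^2 \le \ell^2(i)$ (gradient $2p_i$ in the $p_i$ block), stationarity supplies multipliers $\mu_i \ge 0$, with $\mu_i = 0$ whenever $\|p_i\| < \ell(i)$, making this gradient equal to $2\mu_i p_i$; taking $x_i$ to be the corresponding nonnegative multiple yields (\ref{xeqn}) (the case $\ell(i)=0$ forces $p_i=0$, and both sides vanish, so we set $x_i=0$ there). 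Two easy consequences: summing (\ref{xeqn}) over $i$ and using $w(i,j)=w(j,i)$ gives $\sum_i x_i p_i = 0$; and $x_i>0$ forces the $i$th constraint active, i.e. $\|p_i\| = \ell(i)$ (complementary slackness).

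Step (ii) is the heart of the argument, and is where affine independence is essential. Rewrite (\ref{xeqn}) as $\sum_{j\ne i} a_{ij}(p_i - p_j) = x_i p_i$ with $a_{ij} = w(i,j)/d(p_i,p_j) \ge 0$. The key linear-algebra fact is that affine independence of $p_1,\dots,p_n$ in $\real^{n-1}$ implies that for each fixed $i$ the $n-1$ vectors $\{p_i - p_j : j\ne i\}$ are linearly independent: if $\sum_{j\ne i}c_j(p_i-p_j)=0$ then, setting $c_i = -\sum_{j\ne i}c_j$, one gets $\sum_j c_j p_j = 0$ with $\sum_j c_j = 0$, hence all $c_j = 0$. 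So these vectors form a basis of $\real^{n-1}$, and for each $i$ the coefficients $(a_{ij})_{j\ne i}$ are the unique solution of that equation. When $S>0$, a direct check using $\sum_j x_j p_j = 0$ shows $\hat a_{ij} := x_i x_j/S$ also solves every one of these $n$ equations, so $a_{ij} = \hat a_{ij}$; in particular, if $x_i=0$ this forces $w(i,j)=0$ for all $j$, keeping the $\frac{0}{0}$ terms in $D(n,w,\ell)$ equal to $0$. (If $S=0$, every $x_i=0$, and the same basis fact forces $w\equiv 0$, so both problems trivially have value $0$; this degenerate branch is dispatched separately.) I expect this to be the main obstacle: getting the basis fact and the verification that the product form satisfies all $n$ vector equations, plus the bookkeeping around zero multipliers and the $\frac{0}{0}$ convention.

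For step (iii): with $\{x_i\}$ as above, re-run the proof of Lemma \ref{weak duality}. Inequality (\ref{three}) is an equality since $\sum_i x_i p_i = 0$; inequality (\ref{two}) is an equality since each term $x_i(\ell^2(i)-\|p_i\|^2)$ vanishes by complementary slackness; and the Cauchy--Schwartz step (\ref{one}) is an equality since the vectors $A(i,j)=w(i,j)/\sqrt{x_ix_j}$ and $B(i,j)=\sqrt{x_ix_j}\,d(p_i,p_j)$ are proportional, namely $A = B/S$, by step (ii). Hence the value of $D(n,w,\ell)$ at $\{x_i\}$ equals $\sum_{i<j}w(i,j)d(p_i,p_j)$, the value of $P(n,w,\ell)$ at $\{p_i\}$; call it $V$. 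Since $\{p_i\}$ is feasible for $P(n,w,\ell)$ and $\{x_i\}$ for $D(n,w,\ell)$, Lemma \ref{weak duality} gives $V\le\val(P)\le\val(D)\le V$, so everything equals $V$, the two feasible solutions are global optima, and (\ref{xeqn}) holds — which is the claim.
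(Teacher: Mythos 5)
Your proposal is correct and follows essentially the same route as the paper: derive (\ref{xeqn}) from stationarity, use affine independence to show (\ref{xeqn}) forces $w(i,j)=x_ix_j\,d(p_i,p_j)/\sum_k x_k$, and then check that each inequality in the proof of Lemma \ref{weak duality} is tight. You merely make explicit some details the paper leaves implicit (the linear independence of $\{p_i-p_j\}_{j\ne i}$, the verification of the product formula via $\sum_j x_jp_j=0$, and the degenerate case $\sum_k x_k=0$), which is fine.
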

\begin{proof}
  Fix any $n$, $w$, and $\ell$.
  Consider the objective function $\Phi(\{p_i\}) = \sum_{ij} w(i,j) d(p_i,p_j)$
  of $P(n,w,\ell)$.  That $\{p_i\}$ is stationary
  means that the gradient of $\Phi$ is a nonnegative combination
  of the gradients of the constraints of $P(n,w,\ell)$ active at $\{p_i\}$.
  By elementary calculus, 
  the gradient of $\Phi$ consists of a vector $f_i$ for each point $p_i$,
  with each $f_i$ equal to the right-hand side of (\ref{xeqn}).
  The only constraint on $p_i$ is $\|p_i\| \le \ell(i)$,
  whose gradient (again by elementary calculus) 
  is a nonnegative multiple of $p_i$.
  Thus, for each $i$, there exists an $x_i \ge 0$ such that (\ref{xeqn}) holds.
  Note that if $\|p_i\| < \ell(i)$, then the constraint is not active,
  so that $f_i$ must be the zero vector.
  In this case we take $x_i = 0$.

  We will show that each inequality in Lemma \ref{weak duality}
  is tight for these $\{p_i\}$ and $\{x_i\}$.
  Inequality (\ref{three}) is tight because, by (\ref{xeqn}),
  $\sum_i x_i p_i$ is the zero vector.
  Inequality (\ref{two}) is tight
  because $\|p_i\| < \ell(i)$ only if $x_i = 0$.
  
  Inequality (\ref{one}) is tight
  provided the vector $A$ (in the proof of Lemma \ref{weak duality})
  is a scalar multiple of $B$.
  Assume $\{p_i\}$ is affinely independent.
  Then, considering $\{x_i\}$ and $\{p_i\}$ fixed
  and $\{w(i,j)\}$ as the set of unknowns (i.e., reversing their roles),
  (\ref{xeqn}) uniquely determines each $w(i,j)$.
  Since
  \begin{equation}
    w(i,j) = \frac{x_i x_j d(p_i, p_j)}{\sum_k x_k} ~~ (1\le i < j \le n)
    \label{weqn}
  \end{equation}
  is consistent with (\ref{xeqn}) 
  (check this by substitution for $w(i,j)$ in (\ref{xeqn})),
  it follows that (\ref{weqn}) necessarily holds.
  Thus, $A$ is a scalar multiple of $B$ and Inequality (\ref{one}) is tight.
\end{proof}

A physical model for the quantities involved is as follows.
Consider a physical system of $n$ points $\{p_i\}$.
Each point $p_i$ is constrained to a ball of radius $\ell(i)$ 
centered at the origin.
For each pair of points $(p_i,p_j)$,
$p_i$ repels $p_j$ (and vice versa) with a force of magnitude $w(i,j)$.

Under this interpretation, 
each vector $f_i$ in the proof corresponds to the force on $p_i$,
and $x_i$ is the magnitude of this force, divided by $\|p_i\|$.

\section{Solving $P(n,w,\ell)$ in Polynomial Time}

If the instance of $P(n,w,\ell)$ is small or has a high degree of symmetry,
the dual problem $D(n,w,\ell)$ might yield a function
that can be minimized directly by symbolic methods.
In general, 
it is possible to solve $P(n,w,\ell)$
(to any given degree of precision)
in polynomial time using semi-definite programming,
following the approach in \cite{GoemansW94}.

Those authors consider a related problem $GW(w,n)$:
\begin{opt}{$\maximize_{\{p_i\}}$}{
    \sum_{1 \le i < j \le n} w(i,j)d^2(p_i, p_j)}
  p_i & \in & \real^{n}  & (i=1,..,n);
  \\ \|p_i\| & = & 1     & (i=1,..,n).
\end{opt}
The authors show how to solve this problem in polynomial time
by formulating it as a semi-definite program,
and how to round a (near-)optimal set of points $\{p_i\}$
to obtain an approximate solution to a corresponding max-cut problem.
This approach yielded the first polynomial-time approximation algorithm
achieving a performance guarantee better than two for the max-cut problem.

We briefly sketch their aproach for solving $GW(w,n)$
and how it can be modified to solve $P(w,n,\ell)$.
The connection between sets of points 
and positive semi-definite matrices is the following:
an $n\times n$ symmetric matrix $Y$ is positive semi-definite
if and only if there exists a set of $n$ points $\{p_i\}$ in $\real^n$
such that $Y_{ij} = p_i \cdot p_j$.
Thus, $GW(w,n)$ is equivalent to following:
\begin{opt}{$\maximize_{\{Y\}}$}{
    \sum_{ij} w(i,j)(2-2 Y_{ij})}
  Y & \mbox{\small is} & 
        \mbox{\small an $n\times n$ symmetric, positive semi-definite matrix};
  \\ Y_{ii} & = & 1     & (i=1,..,n).
\end{opt}

The space of $n\times n$ symmetric, positive semi-definite matrices
admits a polynomial time separation oracle
because a symmetric matrix $Y$ is positive semi-definite
if and only if $x^T Y x \ge 0$ for each $x\in\real^n$,
and in fact it suffices to check each eigenvector $x$ of $Y$.
Thus, combining the constraint that $Y$ is positive semi-definite
with arbitrary linear inequalities on the elements of $Y$
yields a convex space with a polynomial time separation oracle.
Approximate feasibility of such a problem is testable in polynomial time
via the ellipsoid method.  
Thus, $GW(n)$ can be solved to near-optimality in polynomial time.

A similar approach can be used to solve $P(n,w,\ell)$ in polynomial time.
In particular, $P(n,w,\ell)$ 
corresponds to the following semi-definite program:
\begin{opt}{$\maximize_{\{Y\}}$}{
    \sum_{ij} w(i,j)\sqrt{Y_{ii}+Y_{jj}-2 Y_{ij}}}
  Y & \mbox{is} & 
        \mbox{an $n\times n$ symmetric, positive semi-definite matrix};
  \\ Y_{ii} & \le & \ell(i) & (i=1,..,n).
\end{opt}
Since $\sum_{ij} w(i,j)\sqrt{Y_{ii}+Y_{jj}-2 Y_{ij}}$
is a concave function in $\{Y_{ij}\}$
whose gradient can be computed in polynomial time,
the above program also admits a separation oracle
sufficient to solve it to near-optimality in polynomial time
using the ellipsoid method.

\section{Open Problems}
It would be interesting to obtain a more efficient algorithm
for solving $P(w,n,\ell)$ than is obtained by reducing to the ellipsoid method.
Especially interesting would be a primal-dual algorithm
along the lines of traditional ``combinatorial'' algorithms
for solving or approximating linear programs.
It is not clear how to achieve such algorithms 
in the semi-definite setting.

Similarly, the only known method for achieving a better factor than two
for the max-cut problem is by reduction to semi-definite programming.
Goemans and Williamson leave open the problem 
of finding a more efficient algorithm that beats a factor of two.
A more efficient algorithm for $P(n,w,\ell)$ (with each $\ell(i) = 1$)
would solve this,
because applying their randomized rounding technique
to $P(n,w,\ell)$ also yields an approximation algorithm for max-cut
with performance guarantee better than two.

On the other hand, consider the generalization of $GW(n,w)$
in which the objective function is replaced by
$\sum_{ij} w(i,j) d^{2+\epsilon}(p_i,p_j)$ for some $\epsilon \ge 0$.
For $\epsilon > 0$, 
applying Goemans and Williamson's approach 
to this program rather than $GW(n,w)$
would provide a better approximation to max-cut.
Is the generalization solvable in polynomial time for some $\epsilon > 0$?

\section{Acknowledgements}
Thanks to Michael Todd for helpful discussions 
and comments on an earlier draft of this paper.

\bibliographystyle{plain}
\bibliography{full,tech-report}

\end{document}